\newcommand{\alias}{C:/Users/yuzhe/Dropbox/bib/alias}
\newcommand{\New}{C:/Users/yuzhe/Dropbox/bib/New}
\newcommand{\Main}{C:/Users/yuzhe/Dropbox/bib/Main}
\newcommand{\FP}{C:/Users/yuzhe/Dropbox/bib/FP}
\newcommand{\BS}{\mathbb{S}} % unit circle
\newcommand{\R}{\mathbb{R}} % reals
\newcommand{\BT}{\mathbb{T}} % integers
\newcommand{\CS}{\mathcal{S}} % integers
\newcommand{\CC}{\mathcal{C}} % 
\newcommand{\CG}{\mathcal{G}} % 
\newcommand{\CE}{\mathcal{E}} % 
\newcommand{\CM}{\mathcal{M}} %
\newcommand{\CV}{\mathcal{V}} %
\newcommand{\CD}{\mathcal{D}} %
\newcommand{\diag}{{\rm diag}} % column space
\newtheorem{theorem}{Theorem}%[section]
\newtheorem{lemma}{Lemma}
\theoremstyle{remark}
\newtheorem{remark}{Remark}%[subsection]
\newtheorem{example}{Example}
\newtheorem{assumption}{Assumption}
\newtheorem{definition}{Definition}
\newcommand{\inter}{{\rm inter}} %
\newcommand{\intra}{{\rm intra}} %
\newcommand{\ctr}{{\rm ctr}} %
\newcommand{\dir}{{\rm dir}} %
\newcommand{\blkdiag}{{\rm blkdiag}} %
\newcommand\SmallMatrix[1]{{%
		\scriptsize\arraycolsep=0.4\arraycolsep\ensuremath{\begin{bmatrix}#1\end{bmatrix}}}}
\newcommand\SmallMatrixNBla[1]{{%
		\small\arraycolsep=0.9\arraycolsep\ensuremath{\begin{matrix}#1\end{matrix}}}}
\newcommand\scalemath[2]{\scalebox{#1}{\mbox{\ensuremath{\displaystyle #2}}}}
\newcommand{\EP}{\SmallMatrixNBla{\frac{1}{\varepsilon}}} %
\newcommand{\EPt}{\SmallMatrixNBla{\frac{t}{\varepsilon}}} %
\newcommand\PC[1]{^{(#1)}}
\newcommand*{\QEDA}{\null \hfill\ensuremath{\triangle}}
\title{\LARGE \bf
Vibrational Control of Cluster Synchronization: \\Connections with Deep Brain Stimulation
}
\author{Yuzhen Qin, Danielle S. Bassett, and Fabio Pasqualetti% <-this % stops a space
%  \thanks{\rd 
%	}% <-this % stops a space
	\thanks{This work was supported in part by awards ARO-W911NF1910360 and NSF-NCS-FO-1926829. Y. Qin and F. Pasqualetti are with the Department of Mechanical
		Engineering, University of California at Riverside, \{yuzhenqin, fabiopas\}@engr.ucr.edu. D. S. Bassett is with
		the Department of Bioengineering, the Department of Electrical \&
		Systems Engineering, the Department of Physics \& Astronomy, the
		Department of Psychiatry, and the Department of Neurology, University
		of Pennsylvania, and The Santa Fe Institute, dsb@seas.upenn.edu.
	}
}
\begin{document}

\maketitle
\thispagestyle{empty}
\pagestyle{empty}

%%%%%%%%%%%%%%%%%%%%%%%%%%%%%%%%%%%%%%%%%%%%%%%%%%%%%%%%%%%%%%%%%%%%%%%%%%%%%%%%
\begin{abstract}
	Cluster synchronization underlies various functions in the brain. Abnormal patterns of cluster synchronization are often associated with neurological disorders. Deep brain stimulation (DBS) is a neurosurgical technique used to treat several brain diseases, which has been observed to regulate neuronal synchrony patterns. Despite its widespread use, the mechanisms of DBS remain largely unknown. In this paper, we hypothesize that DBS plays a  role similar to \emph{vibrational control} since they both highly rely on high-frequency excitation to function. Under the framework of Kuramoto-oscillator networks, we study how vibrations introduced to network connections can stabilize cluster synchronization. We derive some sufficient conditions and also provide an effective approach to design vibrational control. Also, a numerical example is presented to demonstrate our theoretical findings. 
\end{abstract}

%%%%%%%%%%%%%%%%%%%%%%%%%%%%%%%%%%%%%%%%%%%%%%%%%%%%%%%%%%%%%%%%%%%%%%%%%%%%%%%%
\section{INTRODUCTION}

Cluster synchronization describes a phenomenon where the behavior of the units in a network evolves into different, yet synchronized, clusters. It has been widely observed in the brain as correlated neural activity. Patterns of cluster synchronization underlie various brain functions such as neuronal communication, memory formation, and cognition \cite{HG-PAA-DG-AA-KA:19,FJ-AN:2011}. Pathological synchrony patterns also characterize many brain disorders, e.g., Parkinson's disease \cite{Hammond2007} and epilepsy \cite{JP-DCM-JJGR-SCA:2013}. 

%JFH-DJH-MC-MS-AKE:12,GM-KMA-EMA-SP:16

%For instance, excessive synchronization in the basal-ganglia-cortical loop has been found in the brain of patients with Parkinson's disease \cite{Hammond2007}. Epileptics exhibit exaggerated synchronization in cortical regions \cite{JP-DCM-JJGR-SCA:2013}. 

DBS is a therapeutic technique that uses intracranial electrodes and high-frequency electrical stimulation (see Fig.~\ref{conceptual} (a)) to treat such brain disorders \cite{Krauss2021}. Despite its ubiquitous use, the underlying mechanisms of DBS are still elusive. Selecting the proper dose of stimulation and the best location in the brain, which is crucial for achieving the greatest clinical benefits of DBS, still largely relies on trial and error \cite{Krauss2021}. It is thus critical to further characterize the mechanisms that underpin  DBS's efficacy, which, in turn, can inform the design of optimal and personalized brain stimulation. 

In this paper, we provide a hypothetical mechanism for DBS from the perspective of control systems. We propose that high-frequency signals delivered by DBS to brain regions work as \textit{vibrational control}. Vibrational control is an open-loop control strategy that has been used to stabilize engineering systems, e.g., inverted pendulums, chemical reactors, and under-actuated robots (see \cite{SMM:80,REB-JB-SMM:86b,BS-BTZ:97,CX-TY-MI:2018} and the references therein). Similar to DBS, vibrational control relies on high-frequency dithers. It is particularly useful in situations where feedback control is not applicable since  online measurement of systems' states or outputs is unavailable.
Using the Kuramoto-oscillator framework, we study how vibrational control can stabilize cluster synchronization. Despite its simplicity, the Kuramoto model has been widely used to study synchronization  in neural systems \cite{BM-HS-DA:2010}. 
 
\textbf{Related work.} Cluster synchronization has recently attracted extensive attention. Some studies have found that cluster synchronization is closely related to network symmetries \cite{Pecora2014,YSC-TN-AEM:17,YQ-MC-BDOA-DSB-FP:20} and equitable partitions \cite{Schaub2016}. Stability conditions for cluster synchronization are constructed in networks with dyadic \cite{TM-GB-DSB-FP:18,QY-KY-PO-CM:21,FP-SA-MT:2020,KR-IH:2021} and hyper connections \cite{SA-DRM:2021a,SA-DRM:2021b}. Different control strategies, such as pinning control \cite{WW-WZ-TC:09}, and intervention of network connections or nodes' dynamics \cite{GLV-FM-LV:2018,DF-GR-MDB:17,TM-GB-DSB-FP:21-arxiv}, are proposed to control cluster synchronization. By contrast, we employ vibrational control, a much more realistic control strategy especially for neural oscillation regulation since it does not need to modify structural brain networks or the intrinsic dynamics of neurons. 

%VTo the best of our knowledge, it has not been used to control networked systems. 

 \begin{figure}[t]
	\centering
	\includegraphics[scale=0.8]{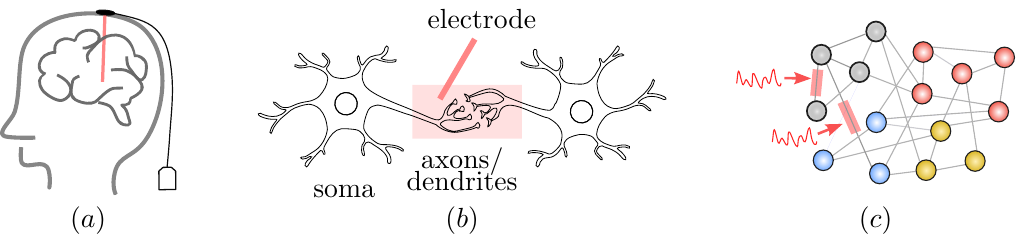}
	\caption{Vibrational control of cluster synchronization inspired by DBS. (a) DBS consists of an intracranial electrode, an extension wire, and a pulse generator. (b) DBS predominantly affects axons and dendrites rather than the soma. (c) Vibrations are introduced to network connections to stabilize cluster synchronization (different colors represent different clusters).}
	\label{conceptual}	
	\vspace{-12pt}
\end{figure}

\textbf{Paper contribution.}  The contribution of this paper is threefold. First, we use vibration control, a potential mechanism of DBS, to regulate cluster synchronization. 
Different from the existing computational studies on DBS (e.g., see \cite{WG-DB-CH-BP-BC-BR:2019}), we introduce control inputs to network connections rather than nodes' dynamics (see Fig.~\ref{conceptual} (c)). This is inspired by the observation that brain stimulation predominantly affects dendrites and axons near the electrode, rather than the soma  \cite{HMH-BV-RRC:09} (see Fig.~\ref{conceptual} (b)). Second, we derive sufficient conditions on the vibrational inputs such that they can stabilize originally unstable cluster synchronization. The key idea is that vibrations are required improve the robustness of synchronization within clusters to a level that is sufficient to overcome perturbations resulting from inter-cluster connections. Averaging methods and perturbation theory are used for the analysis. Third, we propose an analytically tractable approach to design vibrational control. Finally, we provide an example to demonstrate our theoretical findings. 
 
%   as far as we know, this paper is the first attempt to use vibrational control to stabilize cluster synchronization in oscillator networks.
%As a minor contribution, we present a necessary condition on the network structure so that cluster synchronization is vibrationally stabilizable.
%\newpage
\textbf{Notation.}  We denote the unit circle as $\BS^1$, a point of which is a phase. The $n$-dimensional torus is denoted by $\BT^n = \mathbb{S}\times\dots\times\mathbb{S}$. Given a set $\CS$, $|\CS|$ stands for its cardinality. For two sets $\CS_1$ and $\CS_2$, $\CS_1\backslash\CS_2$ is their set-theoretic difference. Given $A\in\R^{n\times m}$, $B:=[A]^+$ is the matrix where $b_{ij}=a_{ij}$ if $a_{ij}>0$ and $b_{ij}=0$ otherwise. Given matrices $A_1,\dots,A_r$, $\blkdiag(A_1,\dots,A_r)$ is the block-diagonal matrix of them.

%The success of
%DBS is critically dependent on delivering the appropriate dose
%of stimulation at the best location within the target region.

%DBS
%programming, the process of individually titrating the dose of
%electrical stimulation delivered to achieve maximal clinical
%benefits, remains largely a trial-and-error process predicated on
%immediate clinical observations and neurologist experience

%cluster synchronization in the brain \cite{GM-KMA-EMA-SP:16,GER-PVJL-RM:17}, 

%animal locomotion is controlled to some extent by central pattern generators that impose a given gait by exploiting synchronous clusters of neurons 

%Stimulation acts predominantly on axons and dendrites near the electrode, rather than on the soma, which have substantially higher stimulation thresholds \cite{HMH-BV-RRC:09}.

%\begin{table}
%\caption{An Example of a Table}
%\label{table_example}
%\begin{center}
%\begin{tabular}{|c||c|}
%\hline
%One & Two\\
%\hline
%Three & Four\\
%\hline
%\end{tabular}
%\end{center}
%\end{table}

%%%%%%%%%%%%%%%%%%%%%%%%%%%%%%%%%%%%%%%%%%%%%%%%%%%%%%%%%%%%%%%%%%%%%%%%%%%%%%%%

\section{Problem Formulation and Preliminary}

\subsection{Problem formulation}

%There is a connection between the oscillator $i$ and $j$ if $a_{ij}>0$, and the weight is $a_{ij}$; there is no connection if $a_{ij}=0$.

We consider $n$ oscillators coupled by an undirected network $\CG=(\CV,\CE)$ associated with the weighted adjacency matrix $A=A^\top=[a_{ij}]_{n \times n}$.  The dynamics of the oscillators are governed by the following Kuramoto model:
\begin{equation}\label{no_input}
	\dot \theta_i =\omega_i + \sum_{j=1}^{n} a_{ij} \sin(\theta_j- \theta_i),
\end{equation}
where $\theta_i \in \BS^1$ is the $i$th oscillator's phase, and $\omega_i\in \R$ is its natural frequency.  Let $\theta:=[\theta_1,\dots,\theta_n]^\top$ and $\omega:=[\omega_i,\dots,\omega_n]^\top$.

Since we are interested in studying cluster synchronization, we define cluster synchronization manifolds as follows.
\begin{definition}[\textit{Synchronization manifold}]\label{CS:manifold}
	For the the network $\CG=(\CV,\CE)$, consider the partition $\CC:=\{\CC_1,\CC_2,\dots,\CC_r\}$, where $\CC_k\subset \CV$, $\CC_k \cap  \CC_\ell =\emptyset$ for any $k\neq \ell$, and $\cup_{k=1}^r \CC_k=\CV$. The \textit{cluster synchronization manifold} associated with the partition $\CC$ is defined as
	\begin{equation*}
	\CM:= \{\theta \in \BT^n: \theta_i=\theta_j, \forall i,j \in \CC_k,  k= 1,\dots,r\}.\hspace{12pt}\QEDA
\end{equation*}
\end{definition}

The cluster synchronization manifold is invariant along the system \eqref{no_input} if starting from any $\theta(0)\in \CM$, the solution to \eqref{no_input} satisfies $\theta(t) \in \CM$ for all $t\ge 0$. Following \cite{TM-GB-DSB-FP:18,LT-CF-MI-DSB-FP:17}, we make the following assumption to ensure that $\CM$ is invariant. 

\begin{assumption}[\textit{Invariance}]\label{invariance}
 For $k=1,2,\dots, r$:  i) the natural frequencies satisfy $\omega_i=\omega_j$ for any $i,j \in \CC_k$; and ii) the coupling strengths satisfy that, for any $\ell\in\{1,2,\dots,r\}\backslash\{k\}$,  $\sum_{p\in \CC_\ell}(a_{ip}-a_{jp})=0$ for any $i,j \in \CC_k$. \QEDA
\end{assumption}

In addition to Assumption~\ref{invariance}, to guarantee that the cluster synchronization represented by $\CM$ can appear in the network, $\CM$ is required to be stable. Further conditions combining the connection weights and natural frequencies need to be satisfied to ensure the stability of $\CM$ (e.g., some sufficient conditions are constructed in {\cite{TM-GB-DSB-FP:18,QY-KY-PO-CM:21,Schaub2016}). 
	
Yet, changes on the anatomical connectivity or the intrinsic dynamics of neuronal populations, caused by brain aging or disorders \cite{WM-JS-YY:2016},  may violate such conditions so that cluster synchronization patterns needed for normal brain functions are no longer stable. In this paper, we aim to investigate how vibrational control, a control strategy that resembles DBS \cite{Krauss2021}, can restore the stability of such desired synchrony patterns.

%Some sufficient conditions on the coupling strength and natural frequencies for the stability of $\CM$ are obtained in {\cite{TM-GB-DSB-FP:18,QY-KY-PO-CM:21}}. Roughly speaking, if the coupling strengths in each cluster are sufficiently larger than the inter-cluster ones or the natural frequencies in different clusters are sufficiently distinct,  $\CM$ is exponentially stable. 

Specifically, we consider the following controlled model
\begin{equation}\label{model:controlled}
	\scalemath{1}{\dot \theta_i =\omega_i + \sum_{j=1}^{n} \Big(a_{ij} + q(a_{ij})u_{ij}(t) \Big) \sin(\theta_j- \theta_i)}, 
\end{equation}
where $q(a_{ij})=1$ if $a_{ij}>0$ and $q(a_{ij})=0$, otherwise\footnote{This setting ensures that control inputs can only introduce vibrations to the existing connections and do not unreasonably create new ones.}, and $u_{ij}(t)$ is the corresponding vibrational control input. Although the couplings are symmetric, satisfying  $a_{ij}=a_{ji}$ for any $i$ and $j$, we allow the effect of the vibrational control on each edge to be asymmetric, that is, it is allowed that $u_{ij}(t)\neq u_{ji}(t)$.
As typically done in the literature {(e.g., \cite{BS-BTZ:97,REB-JB-SMM:86b,SMM:80})}, we assume that $u_{ij}(t)$ is periodic with period $T$ and has high frequency and zero mean. It can be seen that $\int_{0}^{T} a_{ij} + q(a_{ij}) u_{ij}(t) dt =a_{ij}$, which means that the effect of vibrations on network weights is zero in average.  Particularly, we consider sinusoidal vibrations in this paper, i.e.,
\begin{align*}
 	u_{ij}(t)=\frac{u_{ij}}{\varepsilon}\sin\Big(\frac{t}{\varepsilon}\Big),
\end{align*}
 where $\varepsilon>0$ determines the amplitude and frequency of the vibrations. Then, the period becomes $T=2\varepsilon\pi$.

It remains to design the parameters of the vibrational control, i.e., $u_{ij}$'s and $\varepsilon$, to stabilize the cluster synchronization manifold $\CM$. To this end, it is necessary to ensure that the vibrational inputs preserve the invariance of $\CM$. In this paper, we consider a special configuration of vibrations that satisfies
%such that, for any pair of distinct clusters $\CC_k$ and $\CC_\ell$,  $u_{ij}(t)=0$ for any $i\in \CC_k, j\in \CC_\ell$.
\begin{align*}
	&{u_{ij}(t)=0,} &\forall i\in \CC_k, j\in \CC_\ell, \forall k\neq \ell,
\end{align*}
so that the invariance of $\CM$ is preserved.
In other words, vibrations are \textit{only} introduced to the connections between oscillators {within the same clusters} (such connections are referred to as \textit{intra-cluster connections}).

%The next lemma provides a sufficient condition on the vibrational control so that the cluster synchronization manifold is preserved. 
%\begin{lemma}
%	The cluster synchronization manifold $\CM$ is invariant along the system \eqref{model:controlled}, if, for any $k\in \{1,2,\dots, r\}$, it holds that 
%	\begin{align}\label{perserve}
%		\scalemath{0.9}{\sum_{p\in \CC_\ell}\Big(q(a_{ip}) u_{ip}(t)-q(a_{jp})u_{jp}(t)\Big)=0}
%	\end{align} for all $t\ge 0$ and any $i,j \in \CC_k$. 
%\end{lemma}

\subsection{Preliminary}
To facilitate analysis in the reminder of this paper, we introduce some notations (summarized in Fig.~\ref{notation}) and derive a compact model for the system \eqref{model:controlled}. 

 \begin{figure}[t]
	\centering
	\includegraphics[scale=1.45]{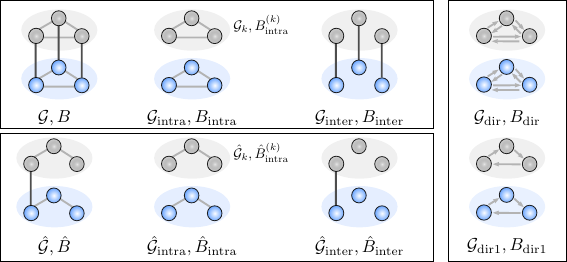}
	\caption{Graphic illustration of the main notations we employ in this paper: different subgraphs and their corresponding incidence matrices. }
	\label{notation}
%	\vspace{-10pt}
\end{figure}

For the network $\CG=(\CV,\CE)$ and a partition $\CC:=\{\CC_1,\CC_2,\dots,\CC_r\}$, define $\CG_k=(\CC_k,\CE_k)$ where $\CE_k:=\{(i,j)\in \CE: i,j\in \CC_k\}$. For each $k$, denote $n_k:=|\CC_k|$ as the number of nodes in $\CG_k$. We assume that each $\CG_k$ is connected and contains at least 2 nodes. Let $\CG_\intra=(\CV,\CE_\intra)=\cup_{k=1}^r \CG_k$ and we refer to it as the intra-cluster graph. Let $\CG_{\inter}=(\CV,\CE_\inter)$ be the inter-cluster subgraph, where $\CE_\inter :=\CE\backslash \CE_\intra$. Define the directed graph associated with $\CG_\intra$ by $\CG_\dir=(\CV,\CE_\dir)$, where $\CE_\dir$ is defined in a way such that every edge in $\CG_\intra$ becomes two directed edges in $\CE_\dir$. Let $\CG_{\dir1}=(\CV,\CE_{\dir1})$ be the graph where $\CE_{\dir1}$ contains only one directed edge between each pair of nodes in $\CE_\dir$. Let $\CG_{\dir2}=(\CV,\CE_{\dir2})$ with  $\CE_{\dir2}:=\CE_{\dir}\backslash \CE_{\dir1}$.
Let $\hat \CG=(\CV,\hat \CE)$ be any spanning tree of $\CG$, and thus $|\hat \CE|=n-1$. For each $k$, Let $\hat \CG_{k}=(\CC_k,\hat \CE_k)$ with  $\hat \CE_k:=\hat \CE\cap \CE_k$. The intra-cluster subgraph of  the spanning tree $\hat \CG$ is $\hat \CG_\intra=(\CV,\hat \CE_\intra):=\hat \cup_{k=1}^r \hat \CG_k$;  the inter-cluster subgraph  of $\hat \CG$ is $\hat \CG_\inter=(\CV,\hat \CE_\inter)$ with $\hat \CE_\inter:=\hat \CE\backslash \hat \CE_\intra$. 

We further define the corresponding (oriented) incidence matrices of the above graphs and subgraphs as in Fig.~\ref{notation}. Without loss of generality, we order the columns of these incidence matrices in a way such that  
\begin{equation}\label{matrices}
	\scalemath{0.9}{\begin{matrix*}[l]		
		B=[B_\intra, B_\inter],&B_\intra=\blkdiag( B^{(1)}_{\intra},\dots, B^{(r)}_{\intra})\\
		\hat B=[\hat B_\intra, \hat B_\inter],&\hat B_\intra=\blkdiag(\hat B^{(1)}_{\intra},\dots,\hat B^{(r)}_{\intra}),\\
		B_\dir=[B_{\dir1},B_{\dir2}]&B_{\dir1}=B_\intra,\hspace{12pt}B_{\dir2}=-B_\intra.
	\end{matrix*}}
\end{equation}

Let $W=\scalemath{0.7}{\begin{bmatrix}
		W_\intra&0\\
		0&W_\inter
	\end{bmatrix}
	}$, where $\scalemath{0.75}{W_\intra:=\diag\{a_{ij},(i,j)\in \CG_\intra\}}$ and $W_\inter:=\diag\{a_{ij},(i,j)\in \CG_\inter\}$ are diagonal weight matrices of $ \CG_\intra$ and $ \CG_\inter$, respectively. Similarly, since vibrations are only introduced to the intra-cluster connections and can have asymmetric influence on each of them, let $U(t):=\diag\{u_{ij}\sin(t),(i,j)\in \CE_\dir\}$. It holds that $U(t)=\blkdiag(U_1(t),U_2(t))$ for $U_1(t):=\diag\{u_{ij}\sin(t),(i,j)\in \CE_{\dir1}\}$ and $U_2(t):=\diag\{u_{ij}\sin(t),(i,j)\in \CE_{\dir2}\}$. Further, $U_1$ and $U_2$ can be decomposed as $U_1(t)=\blkdiag(U^{(1)}_1(t),\dots,U^{(r)}_1(t))$ and $U_2(t)=\blkdiag(U^{(1)}_2(t),\dots,U^{(r)}_2(t))$, where $U^{(k)}_1(t))$ and $U^{(k)}_2(t))$ represent the vibrations to the $k$th cluster.

Let $x:=\hat B_\intra^\top \theta$ and $y := \hat B_\inter^\top \theta$, and they define the intra- and inter-cluster phase differences, respectively.  The dynamics of $x$ and $y$ are (see Appendix~\ref{sec:compact_form} for the derivation and the expressions of the matrices $R_1,R_2$, and $R_3$) 
\begin{subequations}\label{compact_form}
\begin{align}
	&\dot x=f_\intra(x) +f_\inter(x,y)+\EP f_\ctr\big( U\big(\EPt\big),x\big),\label{compact_form:1}\\
	&\dot y=g(x,y)+\EP g_\ctr( U(\EPt),x),
\end{align}
\end{subequations}
where 
 \begin{equation}\label{functions}
	\begin{aligned}
		&\scalemath{0.8}{f_\intra(x)=-\hat B_\intra^\top B_\intra W_\intra \sin(R_1 x)},\\
		&\scalemath{0.8}{f_\inter(x,y)= - \hat B_\intra^\top B_\inter W_\inter \sin(R_2 x+R_3y)},\\
		& \scalemath{0.8}{f_\ctr(U,x)=-\hat B_\intra^\top \Big([B_{\intra}]^+U_{1}(t)-[-B_{\intra}]^+U_{2}(t)\Big)\sin(R_1x)}\\
		&\scalemath{0.8}{g(x,y)=\hat B_\inter^\top \omega -\hat B_\inter^\top B_\intra W_\inter \sin(R_1x)}\\
		&\hspace{2cm} \scalemath{0.8}{-\hat B_\inter^\top B_\inter W_\inter \sin(R_2x+R_3y)},\\
		& \scalemath{0.8}{g_\ctr(U,x)=-\hat B_\inter\Big([B_{\intra}]^+U_{1}-[-B_{\intra}]^+U_{2}\Big)\sin(R_1x)},
	\end{aligned}
\end{equation}
Note that $f_\intra$, $f_\inter$, and $f_\ctr$ describe the dynamics induced by the intra connections, inter connections, and vibrational control inputs, respectively. Here, we single out the important properties of these functions that will be used later: i)  $f_\intra(0)=0$, ii) $f_\inter(0,y)=0$ for any $y$, iii) $f_\ctr(0,x)=0$ for any $x$, and (iv)   $f_\ctr(U,0)=0$ for any $U$.
%and 4) $g_\ctr(0,x,y)=0$ for any $x$ ad $y$. 

Notice that $x=0$ corresponds to the cluster synchronization manifold $\CM$. The manifold $\CM$ is exponentially stable along the system \eqref{model:controlled} if $x=0$ is exponentially stable uniformly\footnote{The equilibrium $x=0$ is said to be exponentially stable uniformly in $y$ if it is exponentially stable starting from any $y(0)$  \cite[Chap. 4]{haddad2011nonlinear}.} in $y$ along \eqref{compact_form}. To stabilize the cluster synchronization manifold $\CM$, it suffices to design vibration control inputs to ensure the uniform exponential stability of  $x=0$  for the system \eqref{compact_form}. 

\section{Vibrational Control: General Results}

In this section, we derive some general results on vibrational stabilization of $x=0$ of the system \eqref{compact_form}. 

The term $f_\inter(x,y)$ in \eqref{compact_form:1} can be taken as a vanishing perturbation dependent of $y$ to the controlled nominal system
\begin{align}\label{nominal}
	\dot x=f_\intra(x)+\EP f_\ctr( U(\EPt),x).
\end{align}
It can be decomposed as $f_\inter=[(f^{(1)}_{\inter})^\top,\dots,(f^{(r)}_{\inter})^\top]^\top$, where $f^{(k)}_{\inter}=- (\hat B^{(k)}_{\intra})^\top B_\inter W_\inter \sin(R_2 x+R_3y)$ describes the inter-cluster dynamics of the $k$th cluster. 

%The following lemma states that the perturbation term $f_\inter(x,y)$ is bounded from above. 

Next, we show how the vibrational control term $f_\ctr(U,x)$ can stabilize $x=0$ in the presence of the perturbation $f_\inter(x,y)$. To this end, we consider a partially linearized model of the system \eqref{compact_form:1}, which is 
\begin{align}\label{linearized}
	\scalemath{1}{\dot x = (J + \EP P(\EPt))x+f_\inter(x,y)},
\end{align}
where
\begin{equation}\label{expres:JandP}
	\begin{aligned}
		&{J=\frac{\partial f_\intra}{\partial x}(0)= \blkdiag (J\PC{1},\dots,J\PC{r})}, \text{and} \\
		& {P(t)=\frac{\partial f_\ctr}{\partial x}(0)=  \blkdiag (P\PC{1}(t),\dots,P\PC{r}(t))},
	\end{aligned}
\end{equation}
with ${J\PC{k}=-(\hat B^{(k)}_{\intra})^\top B_\intra^{(k)} W^{(k)}_\intra R_1}$ and 
\begin{align}\label{expre:P}
	\scalemath{0.8}{P\PC{k}(t)=-(\hat B^{(k)}_{\intra})^\top \Big([B^{(k)}_{\intra}]^+  U^{(k)}_{1}(t)-[-B^{(k)}_{\intra}]^{+} U^{(k)}_{2}(t)\Big)R_1^{(k)}},
\end{align}
for $k=1,\dots,r$. Note that only the first and third terms of Eq.~\eqref{compact_form:1} are linearized in Eq.~\eqref{linearized}. Observe that $P(t)$ is periodic and has the same period $T$ as $U(t)$.

\begin{lemma}[\textit{Connecting the stability of Systems} \eqref{compact_form} \textit{and} \eqref{linearized}] \label{lemma:linearized}
	If the equilibrium $x=0$ is exponentially stable uniformly in $y$ for the system \eqref{linearized}, it is also exponentially stable uniformly in $y$  for the system  \eqref{compact_form}. 
\end{lemma}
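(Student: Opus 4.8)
The plan is to view the full system \eqref{compact_form} as the partially linearized system \eqref{linearized} perturbed by a higher-order remainder, and then to run a Lyapunov converse argument. The structural observation that makes this work is that $f_\inter(x,y)$ appears \emph{verbatim} in both \eqref{compact_form:1} and \eqref{linearized}, so the two $x$-equations differ only by
\[
\rho(t,x) \;:=\; \bigl(f_\intra(x)-Jx\bigr) \;+\; \tfrac1\varepsilon\bigl(f_\ctr(U(t/\varepsilon),x)-P(t/\varepsilon)x\bigr).
\]
First I would fix $\varepsilon>0$ and bound $\rho$. Since $f_\intra$ and $U\mapsto f_\ctr(U,\cdot)$ are $C^2$ compositions of $\sin$ with the fixed incidence/weight matrices of \eqref{functions}, vanish at $x=0$, and have $x$-Jacobians at $x=0$ equal to $J$ and $P(t)$ by \eqref{expres:JandP}--\eqref{expre:P}, Taylor's theorem gives $\|f_\intra(x)-Jx\|\le c_1\|x\|^2$ and, using that $U(\cdot)$ has bounded entries $u_{ij}\sin(\cdot)$, $\|f_\ctr(U(\tau),x)-P(\tau)x\|\le c_2\|x\|^2$ uniformly in $\tau$, for $\|x\|\le\delta_0$. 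Hence $\|\rho(t,x)\|\le c_\varepsilon\|x\|^2$ on $\{\|x\|\le\delta_0\}$ with $c_\varepsilon:=c_1+c_2/\varepsilon$.

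Next I would produce a converse Lyapunov function for \eqref{linearized}, reading it with $y(\cdot)$ as an exogenous (continuous) signal: by hypothesis $x=0$ is exponentially stable for \eqref{linearized} with constants independent of $y(\cdot)$ and of the initial time, and the right-hand side of \eqref{linearized} vanishes at $x=0$ and is Lipschitz in $x$ uniformly in $t$ and in $y$, because $J+\tfrac1\varepsilon P(t/\varepsilon)$ is bounded and $\partial f_\inter/\partial x$ is bounded on $\BT^n$. A converse theorem for exponentially stable nonautonomous systems (see, e.g., \cite[Chap.~4]{haddad2011nonlinear}) then yields $V(t,x)$ with $c_3\|x\|^2\le V(t,x)\le c_4\|x\|^2$, $\dot V\le -c_5\|x\|^2$ along \eqref{linearized}, and $\|\partial V/\partial x\|\le c_6\|x\|$ for $\|x\|$ small, with all constants independent of $y(\cdot)$. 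Finally I would close the estimate: differentiating $V$ along \eqref{compact_form} and using that a trajectory of \eqref{compact_form} is exactly a trajectory of \eqref{linearized} (with the signal $y(t)$) perturbed by $\rho$, we get $\dot V \le -c_5\|x\|^2 + c_6c_\varepsilon\|x\|^3 \le -\tfrac{c_5}{2}\|x\|^2 \le -\tfrac{c_5}{2c_4}V$ whenever $\|x\|\le\delta:=\min\{\delta_0,\,c_5/(2c_6c_\varepsilon)\}$. A standard forward-invariance argument on the sublevel set $\{V\le c_3\delta^2\}\subseteq\{\|x\|\le\delta\}$ shows that trajectories with $\|x(t_0)\|$ small enough stay in the $\delta$-ball and satisfy $\|x(t)\|\le\sqrt{c_4/c_3}\,\|x(t_0)\|\,e^{-c_5(t-t_0)/(4c_4)}$, with rate and overshoot independent of $y(0)$ and $t_0$; since the $y(t)$ generated by \eqref{compact_form} is one admissible signal, this is exactly exponential stability of $x=0$ uniformly in $y$ for \eqref{compact_form}.

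I expect the main obstacle to be making the converse-Lyapunov step airtight, specifically certifying that the bounds on $V$ are \emph{uniform} in the parameter $y$. This reduces to showing the Lipschitz constant of the right-hand side of \eqref{linearized} in $x$ is uniform in $y$ and $t$, which I would argue from the boundedness of the fixed matrices in \eqref{functions} and of $\partial f_\inter/\partial x$ over the torus. A secondary, benign point is that $c_\varepsilon$, and hence the neighborhood $\delta$, degrade as $\varepsilon\to0$; this is immaterial for the present fixed-$\varepsilon$ statement, but it already signals why the averaging-based analysis of the next section is needed.
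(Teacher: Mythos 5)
Your proposal is correct and follows essentially the same route as the paper's own proof: both invoke a converse Lyapunov theorem for the partially linearized system (uniformly in $y$), bound the difference between $f_\intra+\frac{1}{\varepsilon}f_\ctr$ and its linearization $J+\frac{1}{\varepsilon}P$ by a quadratic in $\|x\|$, and absorb the resulting cubic term in $\dot V$ on a sufficiently small ball. Your treatment is somewhat more explicit about the $\varepsilon$-dependence of the remainder constant and the uniformity in $y$, but these are refinements of, not departures from, the paper's argument.
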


The proof can be found in Appendix~\ref{proof:linearized}. From this lemma, one can see that to stabilize the cluster synchronization manifold $\CM$, it suffices to configure the vibrational control such that $P(t)$ stabilizes  $x=0$ of the system \eqref{linearized}. 

Let $s={t}/{\varepsilon}$. The system \eqref{linearized} can be rewritten as 
\begin{align}\label{change:time:scale}
	\frac{d x}{d s} = (\varepsilon J +  P(s))x+\varepsilon f_\inter( x, y).
 \end{align}

Next, we use averaging methods to analyze this system. However, the standard first-order averaging is not applicable here. Recall that $P(s)$ has zero mean. Then, applying the first-order averaging to \eqref{change:time:scale} just eliminates the $P(s)$ term and results in the uncontrolled system $\frac{d x}{d s} = \varepsilon J x+\varepsilon f_\inter( x, y)$. 

To avoid that, we change the coordinates of  \eqref{change:time:scale} first before using averaging method. To to that, we introduce an auxiliary system  
\begin{align}\label{periodic}
	{\frac{d \hat x}{d s}=P(s) \hat x},
\end{align}
and let $\Phi(s,s_0)$ be its state transition matrix. Since $P$ is block-diagonal, it holds that
 $
 	\Phi=\blkdiag(\Phi\PC{1},\dots,\Phi\PC{r}),
$
 where $\Phi\PC{k}$ is the transition matrix of the subsystem in the $k$th cluster ${d \hat x_k}/{d s}=P\PC{k}(s) \hat x_k$. 
 
 Consider the change of coordinates $z(s)=\Phi^{-1}(s,s_0) x(s)$. It follows from the system \eqref{change:time:scale} that
\begin{align}\label{coordinated}
	{\frac{dz}{ds} = \varepsilon (\Phi^{-1} J \Phi z+ \Phi^{-1} f_\inter( \Phi z, y))}.
\end{align}
Since $P(s)$ is $T$-periodic, $\Phi$ and $\Phi^{-1}$ are also $T$-periodic.
Then, we associate \eqref{coordinated} with a partially averaged system
\begin{align}\label{average}
	{\frac{dz}{ds} = \varepsilon (\bar J z+ \Phi^{-1} f_\inter( \Phi z, y))},
\end{align}
where
\begin{align}\label{J-bar}
	{\bar J= \frac{1}{T}\int_{s_0}^{s_0+T} \Phi^{-1} (s,s_0) J \Phi(s,s_0)ds}.
\end{align}

As both $J$ and $\Phi$ are block-diagonal, one can derive that $\bar J$ is also block-diagonal satisfying $\bar J=\blkdiag(\bar J\PC{1},\dots,\bar J\PC{r})$ with 
\begin{align*}
	\bar J \PC{k}= \frac{1}{T}\int_{s_0}^{s_0+T} \Big(\Phi\PC{k}(s,s_0)\Big)^{-1} J\PC{k} \Phi\PC{k}(s,s_0)ds.
\end{align*} 
Recall that $\Phi$ and $\Phi^{-1}$ are periodic, $\|\Phi\|$ and $\|\Phi^{-1}\|$ are both bounded. Then, it can be shown that there exist $\bar \gamma_{k\ell}>0,k,\ell=1,\dots,r$, such that 
\begin{align*}
	\|\Phi^{-1} f\PC{k}_\inter( \Phi z, y)\|\le \sum_{\ell=1}^{r}\bar \gamma_{k\ell} \|z_\ell\|
\end{align*}
for each $k$ (see Lemma~\ref{lemma:bound:pert} in Appendix~\ref{proof:general} for more details). 

\begin{theorem}[{Sufficient condition for vibrational stabilization}]\label{general}
	Assume that $\bar J=\blkdiag(\bar J\PC{1},\dots,\bar J\PC{r})$ in Eq.~\eqref{J-bar} is Hurwitz. Let $\bar X_k$ be the solution to the Lyapunov equation 
	\begin{align}\label{Ly:controlled}
		{(\bar J\PC{k})^\top \bar X_k+\bar X_k^\top \bar J\PC{k}=-I}.
	\end{align}
	Define the matrix $S=[s_{k \ell}]_{r\times r} $ with
	\begin{align}\label{matrix:S}
	s_{k\ell}=\Big\{\begin{matrix*}[l]
		\lambda_{\max}^{-1}(\bar X_k)-\bar\gamma_{k k} , &\text{ if }k=\ell,\\
		s_{k\ell}= -\bar \gamma_{k \ell}, &\text{ if }k \neq\ell.
	\end{matrix*}
	\end{align}
	If $S$ is an $M$-matrix, then  there exists $\varepsilon^*>0$ such that, for any $\varepsilon<\varepsilon^*$:
	
	(i) the equilibrium $x=0$ of the system \eqref{compact_form} is exponentially stable uniformly in $y$; 
	
	(ii) the cluster synchronization manifold $\CM$ of the system \eqref{model:controlled} is exponentially stable. 
\end{theorem}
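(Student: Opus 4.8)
The plan is to prove item~(i); item~(ii) is then immediate, since $x=0$ is precisely the manifold $\CM$ and the equivalence between exponential stability of $\CM$ along \eqref{model:controlled} and of $x=0$ uniformly in $y$ along \eqref{compact_form} was already recorded before the theorem. By Lemma~\ref{lemma:linearized} it suffices to prove that $x=0$ is exponentially stable uniformly in $y$ for the partially linearized system \eqref{linearized}. Exponential stability (uniform in $y$) is unaffected by the time rescaling $s=t/\varepsilon$ that turns \eqref{linearized} into \eqref{change:time:scale}, and it is also unaffected by the change of coordinates $z=\Phi^{-1}(s,s_0)x$, because $\Phi$ is $T$-periodic and continuous, so $\|\Phi(s,s_0)\|$ and $\|\Phi^{-1}(s,s_0)\|$ are bounded above uniformly in $s$ (and in $y$). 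Hence it is enough to show that $z=0$ is exponentially stable uniformly in $y$ for the transformed system \eqref{coordinated}.

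Next I would invoke an averaging argument. In the slow time $s$, \eqref{coordinated} has the form $dz/ds=\varepsilon F(s,z,y)$ with $F$ continuous and $T$-periodic in $s$; averaging only its linear part (whose $s$-mean is $\bar J$, see \eqref{J-bar}) produces the partially averaged system \eqref{average}. Treating $y(\cdot)$ as an exogenous signal --- legitimate because $f_\inter(0,y)=0$ keeps $z=0$ an equilibrium for every $y$ and all the bounds below are independent of $y$ --- a suitable averaging result for exponential stability gives: if $z=0$ is exponentially stable uniformly in $y$ for \eqref{average}, then there is $\varepsilon^*>0$ with the same property for \eqref{coordinated} whenever $\varepsilon<\varepsilon^*$. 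The task thus reduces to proving exponential stability, uniform in $y$, of $z=0$ for \eqref{average}.

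For \eqref{average} I would run a composite/vector-Lyapunov argument exploiting the block-diagonal structure $\bar J=\blkdiag(\bar J\PC{1},\dots,\bar J\PC{r})$. For each cluster $k$, since $\bar J\PC{k}$ is Hurwitz, the (symmetric positive-definite) solution $\bar X_k$ of \eqref{Ly:controlled} yields $V_k(z_k)=z_k^\top\bar X_k z_k$, and differentiating along \eqref{average} while using the interconnection bound $\|\Phi^{-1}f\PC{k}_\inter(\Phi z,y)\|\le\sum_\ell\bar\gamma_{k\ell}\|z_\ell\|$ gives $\dot V_k\le\varepsilon\big(-\|z_k\|^2+2\|\bar X_k\|\,\|z_k\|\sum_\ell\bar\gamma_{k\ell}\|z_\ell\|\big)$. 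Bounding $-\|z_k\|^2\le-\lambda_{\max}^{-1}(\bar X_k)V_k$ and the cross terms via $\|z_k\|\le\lambda_{\min}^{-1/2}(\bar X_k)\sqrt{V_k}$, the vector $v=(\sqrt{V_1},\dots,\sqrt{V_r})^\top$ obeys a componentwise differential inequality $\dot v\le-\varepsilon\,\Theta v$ with nonpositive off-diagonal entries and diagonal $\lambda_{\max}^{-1}(\bar X_k)$ minus a $\bar\gamma_{kk}$-proportional term; after absorbing the fixed positive constants $\|\bar X_k\|,\lambda_{\min}^{\pm1/2}(\bar X_k)$ into the $\bar\gamma_{k\ell}$ (as is done where these constants are introduced), $\Theta=S$ with $S$ as in \eqref{matrix:S}. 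If $S$ is an $M$-matrix there is $d>0$ with $d^\top S>0$ componentwise, so $d^\top v$ is a strict Lyapunov function for the comparison system; by the comparison principle, $\|z(s)\|$ decays exponentially, uniformly in $y$. Propagating this back through the coordinate change, the rescaling, and Lemma~\ref{lemma:linearized} establishes~(i), and hence~(ii).

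The step I expect to be the main obstacle is making the averaging argument rigorous in this non-standard form: the partially averaged system \eqref{average} deliberately keeps the un-averaged, $y$-dependent term $\Phi^{-1}f_\inter(\Phi z,y)$, so a textbook averaging theorem does not apply verbatim. One must either (a) verify that the averaging error estimate only needs the perturbation to be globally Lipschitz and linearly bounded uniformly in $(s,y)$ near $z=0$ --- which holds here --- and that this error is $O(\varepsilon)$ and dominated by the strict decay of $d^\top v$ for small $\varepsilon$; or (b) also average $f_\inter$ and separately control the $O(\varepsilon)$ discrepancy between $\Phi^{-1}f_\inter(\Phi z,y)$ and its $s$-average. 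A secondary, purely bookkeeping, difficulty is carrying the ``uniform in $y$'' qualifier through every transformation, which is routine once one observes that all constants involved ($\|\Phi^{\pm1}\|$, the $\bar\gamma_{k\ell}$, and the spectra of the $\bar X_k$) are independent of $y$.
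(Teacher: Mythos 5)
Your proposal is correct and follows essentially the same route as the paper's proof: reduction via Lemma~\ref{lemma:linearized}, the time rescaling and the coordinate change $z=\Phi^{-1}x$, partial averaging to \eqref{average}, a composite Lyapunov function built from the $\bar X_k$ with the $M$-matrix condition supplying the weights, and an averaging theorem (the paper invokes \cite[Th.~10.4]{HKK:02-bis}) to return to \eqref{coordinated}. The only cosmetic difference is that you phrase the interconnection step as a vector-Lyapunov/comparison argument with $d^\top S>0$, whereas the paper uses the equivalent weighted-sum form $V=\sum_k d_k V_k$ with $DS+S^\top D$ positive definite as in \cite[Th.~9.2]{HKK:02-bis}; you also correctly identify the partial-averaging step as the point that the paper itself only sketches.
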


The proof can be found in Appendix~\ref{proof:general}.
Theorem~\ref{general} provides a guideline to design vibrational control. Any vibrational input that satisfies the following three conditions stabilizes the cluster synchronization manifold $\CM$: 
i) $\bar J$ in \eqref{J-bar} is Hurwitz, 
ii) $S$ defined in \eqref{matrix:S} is an $M$-matrix, and
iii) the frequency of the vibrations is sufficiently high, i.e., $\varepsilon>0$ is sufficiently small.

We shall point out that the condition in Theorem~\ref{general} is still conservative. An example will be presented in Sec.~\ref{numerical} to show  that a vibrational control can effectively stabilize cluster synchronization without satisfying this condition.

\begin{remark}[\textit{Synchronization robustness}]
	For a stable linear system $\dot x=Ax$, previous studies propose that $\lambda_{\max}(X)$, where $X$ is the solution to the Lyapunov equation $A^\top X+X A=-2I$, can measure the robustness of this system \cite{PRV-TM:80,YR:85}. 
	In our case, from \eqref{linearized}, the uncontrolled intra-dynamics around the manifold $\CM$ are described by 
	\begin{align}\label{uncontrolled}
		&\dot x_k=J\PC{k} x_k+f\PC{k}_\inter(x,y),&k=1,\dots,r,
	\end{align}
	where $J\PC{k}$ is stable and $f\PC{k}_\inter(x,y)$ is taken as the perturbation. Here, $x_k=0$ means synchronization of the oscillators in the $k$th cluster. Let $X_k$ be the solution to 
	\begin{align}\label{Ly:uncontrolled}
		\scalemath{0.85}{( J\PC{k})^\top X_k+X_k^\top  J\PC{k}=-I}.
	\end{align}
Likewise, one can use $\lambda^{-1}_{\max}(X_k)$ to measure the robustness of synchronization in the $k$th cluster. If the intra-cluster synchronization is sufficiently robust (i.e., large $\lambda^{-1}_{\max}(X_k)$'s) to dominate the perturbations resulted from inter-cluster connections, the cluster synchronization is stable. A sufficient condition is constructed in \cite[Th. 3.2]{TM-GB-DSB-FP:18}. By contrast, if $\lambda_{\max}(X_k)$'s are not large enough, the cluster synchronization can lose its stability. Yet, the robustness of the intra-cluster synchronization can be reshaped by introducing vibrations to the local network connections. The new robustness is instead measured by $\lambda^{-1}_{\max}( \bar X_k)$, where $\bar X_k$ is the solution to Eq.~\eqref{Ly:controlled}.  \QEDA
\end{remark}
 
The following example illustrates how vibrations can improve synchronization robustness. 

\begin{example}[\textit{Improving robustness by vibrational control}]
We restrict our attention to only one cluster in this example. Specifically, consider that $J\PC{k}$ in \eqref{uncontrolled} is 
	\begin{align*}
		\scalemath{0.85}{J\PC{k}=\begin{bmatrix}
		-1& 4\\0 &-2
	\end{bmatrix}}. 
\end{align*}
Suppose that there is a vibrational control $U\PC{k}(t)$ to the $k$th cluster that results in the state transition matrix 
	\begin{align*}
		{\Phi\PC{k}(t,t_0)=\begin{bmatrix}
				1&0\\
				\cos(t_0)-\cos(t)&1
		\end{bmatrix}}.
	\end{align*} It can be calculated that 
\begin{align*}
	\scalemath{0.9}{\bar J\PC{k} =\frac{1}{2\pi}\int_{0}^{2 \pi} \big(\Phi\PC{k}(t,t_0)\big)^{-1} J\PC{k} \Phi\PC{k}(t,t_0)dt=\begin{bmatrix}
		-1& 4\\-2 &-2
	\end{bmatrix}}.
\end{align*}
Solving the equations \eqref{Ly:controlled} and \eqref{Ly:uncontrolled}, one can obtain $\lambda^{-1}_{\max}( X_k)\approx 0.52$ and $\lambda^{-1}_{\max}(\bar X_k)=2$, which means that the synchronization in this cluster becomes more robust due to the vibrational control. \QEDA
\end{example}

In fact, whether the synchronization robustness can be improved also depends on the intra-cluster network. The following lemma provides a necessary condition. 

\begin{lemma}[\textit{Necessary condition for robustness improvement}]
	Let $\lambda^{-1}_{\max}(\bar X_k)$ and $\lambda^{-1}_{\max}( X_k)$ be the solutions to the equations \eqref{Ly:controlled} and \eqref{Ly:uncontrolled}, respectively. There exists a vibrational control input such that $\lambda^{-1}_{\max}(\bar X_k)> \lambda^{-1}_{\max}(X_k)$ \emph{only if} $\CG_k$ is not a uniformly weighted complete graph\footnote{A uniformly weighted complete graph is a complete graph in which every edge has the same weight.}.
\end{lemma}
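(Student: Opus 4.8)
The plan is to show that if $\CG_k$ is a uniformly weighted complete graph, then \emph{every} admissible vibrational control leaves the synchronization robustness unchanged, i.e.\ $\lambda_{\max}^{-1}(\bar X_k) = \lambda_{\max}^{-1}(X_k)$; contraposition then yields the claim. The key structural fact I would exploit is that for a uniformly weighted complete graph with weight $a$, the matrix $J\PC{k} = -(\hat B^{(k)}_{\intra})^\top B^{(k)}_{\intra} W^{(k)}_{\intra} R_1^{(k)}$ is a scalar multiple of a fixed matrix determined purely by the spanning tree, namely $J\PC{k} = -a\, n_k\, I_{n_k-1}$ after a suitable identification (the reduced Laplacian of $K_{n_k}$ in the incidence-matrix coordinates $x_k = (\hat B^{(k)}_{\intra})^\top\theta$ equals $a n_k$ times the identity, since the Laplacian of the uniformly weighted complete graph is $a n_k I - a \mathbf{1}\mathbf{1}^\top$, whose restriction to the cut space is $a n_k I$). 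This is the computation I would do first and carefully, as the rest hinges on $J\PC{k}$ being a scalar matrix.

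Next I would analyze $\bar J\PC{k} = \frac{1}{T}\int_{s_0}^{s_0+T} (\Phi\PC{k})^{-1} J\PC{k}\, \Phi\PC{k}\, ds$. If $J\PC{k} = c I$ is scalar, then $(\Phi\PC{k})^{-1} J\PC{k} \Phi\PC{k} = c I$ for every $s$, so the integral collapses to $\bar J\PC{k} = cI = J\PC{k}$, regardless of which state transition matrix $\Phi\PC{k}$ the vibrational input produces. Consequently the two Lyapunov equations \eqref{Ly:controlled} and \eqref{Ly:uncontrolled} are \emph{identical}, so $\bar X_k = X_k$ and in particular $\lambda_{\max}^{-1}(\bar X_k) = \lambda_{\max}^{-1}(X_k)$. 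This shows a uniformly weighted complete $\CG_k$ can never yield a strict improvement, which is exactly the contrapositive of the necessary condition. I would present this as the core of the argument and note that it also explains intuitively why full symmetric all-to-all coupling is already ``maximally robust'' in this metric and cannot be helped by modulating individual edges.

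The main obstacle I anticipate is the bookkeeping in the first step: verifying that $J\PC{k}$ is genuinely a scalar matrix requires unwinding the definitions of $\hat B^{(k)}_{\intra}$, $B^{(k)}_{\intra}$, $W^{(k)}_{\intra}$, and $R_1^{(k)}$ from the appendix, and checking that the change of coordinates $x_k = (\hat B^{(k)}_{\intra})^\top \theta$ conjugates the Laplacian flow on $\CC_k$ to multiplication by $a n_k$ on $\R^{n_k-1}$. Concretely, one uses that $-\hat B^{(k)}_{\intra}{}^\top B^{(k)}_{\intra} W^{(k)}_{\intra} R_1^{(k)}$ is similar (via $\hat B^{(k)}_{\intra}$) to the reduced weighted Laplacian $L_k$ of $\CG_k$ restricted to $\mathrm{range}((\hat B^{(k)}_{\intra})^\top)$, and that $L_k = a(n_k I - \mathbf{1}\mathbf{1}^\top)$ acts as $a n_k$ times the identity on the orthogonal complement of $\mathbf{1}$, which is precisely that range. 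A secondary subtlety is the converse direction implicitly needed for sharpness — that if $\CG_k$ is \emph{not} uniformly weighted complete then $J\PC{k}$ is not scalar, hence there is genuine freedom in $\bar J\PC{k}$ — but since the lemma only asserts a necessary (``only if'') condition, I would not need to prove that a beneficial $\Phi\PC{k}$ actually exists; I would only remark that non-scalar $J\PC{k}$ is what makes Example~1 possible.
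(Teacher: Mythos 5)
Your proposal is correct and follows essentially the same route as the paper: argue by contraposition that a uniformly weighted complete $\CG_k$ forces $J\PC{k}=-a n_k I_{n_k-1}$ to be scalar, so $(\Phi\PC{k})^{-1}J\PC{k}\Phi\PC{k}=J\PC{k}$ pointwise, the averaged matrix $\bar J\PC{k}$ equals $J\PC{k}$ for every admissible vibration, and the two Lyapunov equations coincide. The extra care you devote to verifying that the reduced Laplacian acts as $a n_k$ times the identity on the cut space is a welcome elaboration of a step the paper states without detail.
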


 \begin{figure}[t]
	\centering
	\includegraphics[scale=1.3]{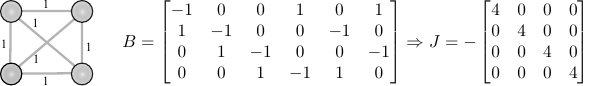}
	\caption{Uncontrollable cluster: a complete and uniformly weighted network. }
	\label{unctr}
\end{figure}

When $\CG_k$ is complete and uniformly weighted, it holds that $J_k=-n_kI_{n_k-1}$, where $n_k$ is the number of nodes (see an example in Fig.~\ref{unctr}). As a consequence, $\lambda^{-1}_{\max}(X_k)= \lambda^{-1}_{\max}(\hat X_k)$ for any vibrational control since
\begin{align*}
	\scalemath{0.9}{\bar J\PC{k} =\int_{0}^{2 \pi} \Phi^{-1}(t) J\PC{k} \Phi(t)=-n_k\int_{0}^{2 \pi} \Phi^{-1}(t) \Phi(t)dt}=J\PC{k}.
\end{align*}

If all the clusters in a network are complete and uniformly weighted, one can see that the synchronization robustness cannot be improved by any vibrational control.

\section{Configuring  Vibrational Control}

\subsection{Configuring lower  triangular transition matrices}

Notice from Theorem~\ref{general} that, when designing a vibrational control, the state transition matrix $\Phi$ of the periodic system \eqref{periodic} plays an important role. However, deriving an explicit expression of $\Phi$ is often not  straightforward. In this subsection, we provide a special approach to configure vibrational control so that the expression of $\Phi$ can be obtained explicitly.

Inspired by \cite{SMM:80}, we consider vibration control inputs $U(t)$ such that $P(t)=\blkdiag (P\PC{1}(t),\dots,P\PC{r}(t))$ in Eq.~\eqref{expres:JandP} satisfies that, for any $k$, $P\PC{k}$ has the following strictly lower triangular form (where $\bar n_k=n_k-1$):
	\begin{align}\label{P(t)}
	&\scalemath{.8}{P\PC{k}(t)=}\nonumber\\
	&\scalemath{0.75}{\begin{bmatrix}
		0 &0 &0 &\cdots &0 &0\\
		u_{21}^{(k)}\sin(t) &0 &0& \cdots &0 &0\\
		u_{31}^{(k)}\sin(t) &u_{32}^{(k)}\sin(t) &0 &\cdots &0 &0\\
		\vdots&\vdots &\vdots&\ddots&\vdots&\vdots\\
		u_{\bar n_k, 1}^{(k)}\sin(t) &u_{\bar n_k,2}^{(k)}\sin(t) &u_{\bar n_k,3}^{(k)}\sin(t) &\cdots &u_{\bar n_k,\bar n_k -1}^{(k)}\sin(t) &0			
	\end{bmatrix}}.
\end{align}

\begin{lemma}[\textit{Lower triangular transition matrix}]
	Assume that the vibrational control is such that $P(t)$  in the system~\eqref{expres:JandP}  satisfies \eqref{P(t)}. Then, the state transition matrix $\Phi(t,t_0)=\blkdiag(\Phi_1,\dots,\Phi_r)$ satisfies
	\begin{align*}
		\scalemath{0.9}{\Phi_k=\begin{bmatrix}
				1 &0 &0 &\cdots &0 &0\\
				\Phi_{21}^{(k)} &1 &0& \cdots &0 &0\\
				\Phi_{31}^{(k)} &\Phi_{32}^{(k)} &1 &\cdots &0 &0\\
				\vdots&\vdots &\vdots&\ddots&\vdots&\vdots\\
				\Phi_{\bar n_k,1}^{(k)} &\Phi_{\bar n_k,2}^{(k)} &\Phi_{\bar n_k,3}^{(k)} &\cdots &\Phi_{\bar n_k,\bar n_k-1}^{(k)} &1			
		\end{bmatrix}},
	\end{align*}
where $\scalemath{1}{\Phi_{ij}^{(k)}(t)=-u_{ij}^{(k)}(\cos (t)-\cos(t_0))}$.
\end{lemma}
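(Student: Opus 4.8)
The plan is to work directly with the defining relation for the transition matrix, $\frac{d}{ds}\Phi(s,s_0)=P(s)\Phi(s,s_0)$ with $\Phi(s_0,s_0)=I$, where $P(s)=\blkdiag(P\PC{1}(s),\dots,P\PC{r}(s))$ has the strictly lower-triangular form \eqref{P(t)}. Since both $P(s)$ and the initial condition are block diagonal, the equation decouples cluster by cluster (this block structure of $\Phi$ was already recorded in the text preceding \eqref{periodic}), so it suffices to fix one index $k$ and establish the claimed form for $\Phi_k$. I would then split the argument into (a) showing $\Phi_k(s,s_0)$ stays lower triangular with all diagonal entries equal to $1$, and (b) computing the strictly lower-triangular entries.

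For (a) I would induct on the row index $i=1,\dots,\bar n_k$. Writing $r_i(s)$ for the $i$-th row of $\Phi_k(s,s_0)$, strict lower-triangularity of $P\PC{k}$ makes the matrix ODE read $\dot r_i(s)=\sum_{m=1}^{i-1}[P\PC{k}(s)]_{i,m}\,r_m(s)$, so row $i$ is driven only by the rows above it. For $i=1$ the sum is empty and $r_1(s)\equiv r_1(s_0)=e_1^\top$. For the inductive step, if $r_1,\dots,r_{i-1}$ are each supported on columns $\le i-1$, then the right-hand side above --- hence $\dot r_i(s)$ --- has zero entries in every column $\ge i$; together with $r_i(s_0)=e_i^\top$ this forces $[\Phi_k]_{i,i}\equiv 1$ and $[\Phi_k]_{i,m}\equiv 0$ for $m>i$, closing the induction. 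This yields the displayed triangular pattern of $\Phi_k$ and hence $\Phi=\blkdiag(\Phi_1,\dots,\Phi_r)$.

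For (b), once the structure is known, each strictly lower-triangular entry $\phi_{ij}:=[\Phi_k(s,s_0)]_{ij}$ ($i>j$) solves the scalar linear ODE $\dot\phi_{ij}(s)=\sin(s)\sum_{m=j}^{i-1}u_{im}^{(k)}\,\phi_{mj}(s)$ with $\phi_{ij}(s_0)=0$, and these are solved top-to-bottom in cascade. On the first subdiagonal ($i=j+1$) the only surviving term is $m=j$ with $\phi_{jj}\equiv 1$, giving $\dot\phi_{j+1,j}(s)=u_{j+1,j}^{(k)}\sin(s)$; integrating with $\int_{s_0}^{s}\sin\tau\,d\tau=-(\cos s-\cos s_0)$ then reproduces exactly $\phi_{j+1,j}(s)=-u_{j+1,j}^{(k)}(\cos s-\cos s_0)$. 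The entries further below are obtained by continuing the same cascade, each additional step being one more elementary sinusoidal quadrature. I expect the main obstacle to be organizational rather than analytical: getting the indexing of the induction in (a) exactly right (the "supported on columns $\le i-1$" invariant is the crux), and then carefully matching the cascade solution in (b) to the stated closed form on the relevant (sub)diagonal; the genuine analysis reduces to the single integral of $\sin$.
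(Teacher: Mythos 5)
The paper states this lemma without a proof, so there is no in-paper argument to compare against; judged on its own merits, your part (a) is correct but part (b) contains a genuine gap, and that gap in fact exposes a problem with the lemma as stated. Part (a) --- the row-by-row induction showing each $\Phi_k$ is lower triangular with unit diagonal --- is fine. (A quicker route: $P\PC{k}(s)=P_0\PC{k}\sin s$ for a \emph{constant} strictly lower triangular $P_0\PC{k}$, so the family $\{P\PC{k}(s)\}_s$ commutes and $\Phi_k(s,s_0)=\exp\bigl(\int_{s_0}^{s}P\PC{k}(\tau)\,d\tau\bigr)=\exp\bigl(-(\cos s-\cos s_0)P_0\PC{k}\bigr)$, which is automatically unipotent lower triangular.)

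The gap is the sentence claiming that ``the entries further below are obtained by continuing the same cascade, each additional step being one more elementary sinusoidal quadrature'' and that this matches the stated closed form. It does not. Carry your own cascade one step past the first subdiagonal: with $\phi_{21}(s)=-u_{21}^{(k)}(\cos s-\cos s_0)$ one has
\begin{align*}
\dot\phi_{31}(s)&=\sin(s)\bigl(u_{31}^{(k)}+u_{32}^{(k)}\phi_{21}(s)\bigr),\\
\phi_{31}(s)&=-u_{31}^{(k)}(\cos s-\cos s_0)+\tfrac12\,u_{32}^{(k)}u_{21}^{(k)}(\cos s-\cos s_0)^2,
\end{align*}
which differs from the claimed $-u_{31}^{(k)}(\cos s-\cos s_0)$ whenever $u_{32}^{(k)}u_{21}^{(k)}\neq 0$. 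Equivalently, $\Phi_k=I-(\cos s-\cos s_0)P_0\PC{k}+\tfrac12(\cos s-\cos s_0)^2(P_0\PC{k})^2-\cdots$, so the entrywise formula in the lemma is exact only when $(P_0\PC{k})^2=0$ --- e.g.\ when $\bar n_k=2$ or when $P\PC{k}$ is supported on a single column, which is precisely what the construction in Lemma~\ref{lemma:existence} and the numerical example produce. So either the lemma needs that extra nilpotency hypothesis, or the entries below the first subdiagonal must carry the full polynomial expression; your proof asserts the unqualified formula and therefore cannot be completed for a general strictly lower triangular $P\PC{k}$ of the form \eqref{P(t)}.
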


The next question is whether there exist vibrational control inputs such that $P(t)$ has the form described by \eqref{P(t)}.
\begin{lemma}[\textit{Existence of vibrational control}]\label{lemma:existence}
	For each intra-cluster network $\CG_k,k=1,\dots,r$, there always exist vibrational control inputs such that $P\PC{k}(t)$ is strictly lower triangular and satisfies $P\PC{k}(t)\neq 0$. 
\end{lemma}

The proof can be found in Appendix~\ref{proof:existence}.
From this lemma, one can always carefully configure the vibrational control so that $P(t)$ is strictly lower triangular. Subsequently, the explicit form of the transition matrix $\Phi$ can be derived, making the design of vibrations more analytically tractable. 

It is worth noting that there certainly exist other configurations of vibration control different from the one we present in this subsection, which also allow for easy derivation of explicit transition matrices.

\subsection{A numerical example}\label{numerical}
In this subsection, we use an example to show how the obtained results can be used to design vibrational control.

We consider the network depicted in Fig.~\ref{ctr}~(a), where the oscillators are partitioned into $3$ clusters.  
The matrix $J=\blkdiag(J\PC{1},\dots,J\PC{3})$ as in the system \eqref{linearized} satisfies
\begin{align*}
	{J\PC{1}=\begin{bmatrix}
		-0.06&0.06\\
		0&-0.18
	\end{bmatrix},J\PC{2}=-3I_3,\text{ and }J\PC{3}=-6I_6.}
\end{align*}
It can be seen in Fig.~\ref{ctr}~(b) that the cluster synchronization manifold $\CM$ in this network is unstable. 

 \begin{figure}[t]
	\centering
	\includegraphics[scale=1.3]{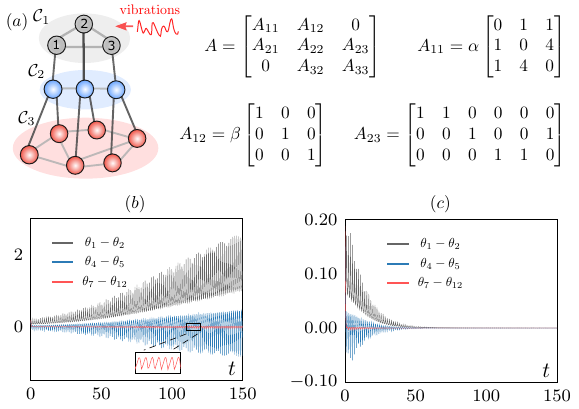}
	\caption{Vibrational stabilization. (a) The network structure and some connection weights, where $\alpha=0.03$ and $\beta=3$. All other unspecified weights are $1$. (b) Phase differences without control, indicating that the cluster synchronization is unstable. (c) Phase differences under vibrational control to the cluster $\CC_1$, showing that the cluster synchronization has been stabilized by just local vibrations ($\varepsilon=0.02$). The natural frequencies in $\CC_1$, $\CC_2$, and $\CC_3$  are $\omega_1=1$, $\omega_2=10$, and $\omega_3=6$, respectively.}
	\label{ctr}
\end{figure}

To stabilize the cluster synchronization manifold, we only introduce vibrations to the cluster $\CC_1$. Specifically, we let
\begin{align*}
	\begin{matrix*}[l]
		&u_{12}(t)=\scalemath{0.8}{\frac{u}{2\varepsilon}} \sin(\EPt),&u_{13}(t)=-\scalemath{0.8}{\frac{u}{2\varepsilon}} \sin(\EPt), \\
		&u_{32}(t)= -\scalemath{0.8}{\frac{u}{2\varepsilon}} \sin(\EPt), &u_{31}(t)=\scalemath{0.8}{\frac{u}{2\varepsilon}} \sin(\EPt),
	\end{matrix*}
\end{align*}
and $u_{ij}=0$ for any other $i$ and $j$.  This configuration of vibrational control ensures that the matrix $P\PC{1}(s)$ in Eq.~\eqref{expres:JandP} is strictly lower triangular, i.e.,
\begin{align*}
	\scalemath{.85}{P\PC{1}(s)=\begin{bmatrix}
			0&0\\
			-u\sin(s)&0
	\end{bmatrix}}.
\end{align*}
Then, the transition matrix of the periodic system \eqref{periodic} can be calculated as
\begin{align*}
\scalemath{0.85}{\Phi\PC{1}(s,s_0)=\begin{bmatrix}
		1&0\\
		u(\cos(s_0)-\cos(t))&1
\end{bmatrix}}.
\end{align*}
Subsequently, the matrix $\bar J\PC{1}$ in Eq.~\eqref{J-bar} is
\begin{align*}
	\scalemath{.85}{\bar J\PC{1}= \frac{1}{2\pi}\int_{0}^{2\pi} \big(\Phi\PC{1}(s,s_0)\big)^{-1} J\PC{1} \Phi\PC{1}(s,s_0)ds=\begin{bmatrix}
		-0.06&0.06\\
		\frac{-0.06u^2}{8}&-0.18
	\end{bmatrix}}.
\end{align*}

Let $u=2\sqrt{2}$. Solving the equations \eqref{Ly:controlled} and \eqref{Ly:uncontrolled}, respectively, one obtains $\lambda^{-1}_{\max}( X_1)\approx 0.109$ and $\lambda^{-1}_{\max}(\bar X_1)\approx 0.133$. This indicates that the synchronization robustness in $\CC_1$ is improved by the vibrational control. From Fig.~\ref{ctr}~(c), it can be observed that the local vibrations introduced to $\CC_1$, with frequency $50$ rad/s (i.e., $\varepsilon=0.02$), successfully stabilize the cluster synchronization in the entire network. 

It is worth mentioning that the condition in Theorem~\ref{general} is not satisfied. Yet and  remarkably, the slight improvement on synchronization robustness still stabilizes the cluster synchronization. This suggests that vibrational control can practically work in a much broader region than the theoretical one we identified in Theorem~\ref{general}.

%\addtolength{\textheight}{-3cm}   % This command serves to balance the column lengths
                                  % on the last page of the document manually. It shortens
                                  % the textheight of the last page by a suitable amount.
                                  % This command does not take effect until the next page
                                  % so it should come on the page before the last. Make
                                  % sure that you do not shorten the textheight too much.

%%%%%%%%%%%%%%%%%%%%%%%%%%%%%%%%%%%%%%%%%%%%%%%%%%%%%%%%%%%%%%%%%%%%%%%%%%%%%%%%
%\newpage
\section{Conclusions}
In this work, we hypothesize that the vibrational control is a mechanism of DBS, a widely-used neurosurgical technique to treat some brain disorders, based on the observation that they both employ high-frequency dithers and regulate synchronization patterns. 
We study how the vibrational control can stabilize cluster synchronization in Kuramoto-oscillator networks. Some sufficient conditions on the vibrations introduced to the network connections are obtained. We also provide an analytically  tractable approach to design vibrational inputs. Finally, we demonstrate our theoretical findings by a numerical example. We believe that our work provides a new angle to understand DBS and can inform the design of DBS algorithms.

%%%%%%%%%%%%%%%%%%%%%%%%%%%%%%%%%%%%%%%%%%%%%%%%%%%%%%%%%%%%%%%%%%%%%%%%%%%%%%%%
%\section{Appendix}
\appendix

\subsection{Derivaton of Eq.~\eqref{compact_form}} \label{sec:compact_form}

To derive Eq.~\eqref{compact_form}, we first present an instrumental lemma.
\begin{lemma}\label{incidence:transfer}
	For the incidence matrices $B$ and $\hat B$, there exists $R=\SmallMatrix{
		R_1&0\\
		R_2&R_3}$ such that $B^\top =R\hat B^\top$,  where 
	\begin{align*}
		\scalemath{0.85}{R_1= \blkdiag\big((B^{(1)}_{\intra})^\top ((\hat B^{(1)}_{\intra})^\top)^\dagger,\dots,(B^{(r)}_{\intra})^\top ((\hat B^{(r)}_{\intra})^\top)^\dagger\big),}
	\end{align*}
	and $\SmallMatrixNBla{R_2=B_\inter^\top (\hat B_\intra ^\top P_\inter)^\dagger}$,
	$\SmallMatrixNBla{R_3= B_\inter^\top (\hat B_\inter P_\intra)^\dagger}$
with $\SmallMatrixNBla{P_\intra=I_n-\hat B_\intra \hat B_\intra^\dagger}$ and $\SmallMatrixNBla{P_\inter=I_n-\hat B_\inter \hat B_\inter^\dagger}$. \QEDA
\end{lemma}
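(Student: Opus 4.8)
The plan is to prove the identity in three steps: first exhibit \emph{some} $R$ with $B^\top=R\hat B^\top$ and observe that it is unique; then show that its upper-right block vanishes and read off $R_1$; and finally derive $R_2$ and $R_3$ by a deflation (projection) argument. For existence, recall that, since $\CG$ is connected, the column $e_i-e_j$ of $B$ indexed by an edge $(i,j)$ is the signed sum of the edges along the unique $\hat\CG$-path joining $j$ to $i$, so $\Img(B)\subseteq\Img(\hat B)$; as $\hat\CG$ is a spanning tree both column spaces in fact coincide with $\mathbf 1_n^\perp$, hence $B=\hat B\tilde R$ for some $\tilde R$, and we set $R:=\tilde R^\top$. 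Uniqueness is immediate because $\hat B$ has full column rank $n-1$: if $R\hat B^\top=0$ then $\hat B\rho=0$, hence $\rho=0$, for every row $\rho^\top$ of $R$; equivalently $R=B^\top(\hat B^\top)^\dagger$.

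Write $R=\begin{bmatrix}R_{11}&R_{12}\\ R_{21}&R_{22}\end{bmatrix}$ conformally with $B=[B_\intra,B_\inter]$ (row blocks) and $\hat B=[\hat B_\intra,\hat B_\inter]$ (column blocks); the first block row is $B_\intra^\top=R_{11}\hat B_\intra^\top+R_{12}\hat B_\inter^\top$. Here I use a choice of $\hat\CG$ (tacit in the setup, and needed for the block form) whose restriction $\hat\CG_k$ to each $\CC_k$ is a spanning tree of $\CG_k$; then every column $e_i-e_j$ of $B_\intra$ lies in $\Img(\hat B_\intra)=\bigoplus_k\Img(\hat B^{(k)}_\intra)$ and is therefore fixed by the orthogonal projector $(\hat B_\intra^\top)^\dagger\hat B_\intra^\top$ onto that subspace, i.e.\ $B_\intra^\top(\hat B_\intra^\top)^\dagger\hat B_\intra^\top=B_\intra^\top$. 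Thus $R_{11}=B_\intra^\top(\hat B_\intra^\top)^\dagger$ together with $R_{12}=0$ solves the first block row, and by uniqueness these are the actual blocks. Since $B_\intra$ and $\hat B_\intra$ are block-diagonal and the pseudoinverse of a block-diagonal matrix is the block-diagonal of the pseudoinverses, $R_{11}$ collapses to the stated $R_1$.

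For the second block row $B_\inter^\top=R_2\hat B_\intra^\top+R_3\hat B_\inter^\top$, let $P_\intra=I_n-\hat B_\intra\hat B_\intra^\dagger$ and $P_\inter=I_n-\hat B_\inter\hat B_\inter^\dagger$ be the orthogonal projectors onto $\Img(\hat B_\intra)^\perp$ and $\Img(\hat B_\inter)^\perp$, so that $P_\intra\hat B_\intra=0$ and $P_\inter\hat B_\inter=0$. Each column $w$ of $B_\inter$ lies in $\mathbf 1_n^\perp=\Img(\hat B_\intra)\oplus\Img(\hat B_\inter)$, a genuine direct sum because $\hat B$ has $n-1$ linearly independent columns; writing $w=\hat B_\intra a+\hat B_\inter b$ and applying $P_\inter$ gives $P_\inter w=P_\inter\hat B_\intra a$, and $P_\inter\hat B_\intra$ has full column rank (its kernel meets $\Img(\hat B_\intra)$ only inside $\Img(\hat B_\intra)\cap\Img(\hat B_\inter)=\{0\}$), so $a=(P_\inter\hat B_\intra)^\dagger P_\inter w$. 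Collecting this over all columns and transposing yields $R_2=B_\inter^\top P_\inter(\hat B_\intra^\top P_\inter)^\dagger$, and the redundant $P_\inter$ drops out because $\Img\big((\hat B_\intra^\top P_\inter)^\dagger\big)=\Img(P_\inter\hat B_\intra)\subseteq\Img(\hat B_\inter)^\perp=\ker\hat B_\inter^\dagger$, so that $B_\inter^\top(I_n-P_\inter)(\hat B_\intra^\top P_\inter)^\dagger=B_\inter^\top\hat B_\inter\hat B_\inter^\dagger(\hat B_\intra^\top P_\inter)^\dagger=0$. This gives the stated $R_2$, and the symmetric argument with the roles of $\intra$ and $\inter$ exchanged gives $R_3$.

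The main obstacle I anticipate is the linear-algebra bookkeeping in the last step: verifying that $P_\inter\hat B_\intra$ (resp.\ $P_\intra\hat B_\inter$) has full column rank, and that the superfluous projector can be cancelled, both of which rely on the direct-sum decomposition $\mathbf 1_n^\perp=\Img(\hat B_\intra)\oplus\Img(\hat B_\inter)$ and on the column-space/kernel identities for the Moore--Penrose inverse. A secondary but essential point is making explicit that $\hat\CG$ is chosen compatibly with the partition, since this is precisely what forces the $(1,2)$ block of $R$ to vanish and makes $R_1$ block-diagonal.
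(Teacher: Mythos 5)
Your proof is correct, and it is substantially more self-contained than the paper's own argument, which only verifies the $R_1$ block via the projector identity $(B^{(k)}_{\intra})^\top = (B^{(k)}_{\intra})^\top ((\hat B^{(k)}_{\intra})^\top)^\dagger(\hat B^{(k)}_{\intra})^\top$ and outsources the expressions for $R_2$ and $R_3$ to a citation. For the intra block your route and the paper's coincide (both rest on $((\hat B^{(k)}_{\intra})^\top)^\dagger(\hat B^{(k)}_{\intra})^\top$ being the orthogonal projector onto $\Img(\hat B^{(k)}_{\intra})$, which fixes every column of $B^{(k)}_{\intra}$); the paper's stated justification that $(B^{(k)}_{\intra})^\top$ has ``full row rank'' is a slip --- it is $(\hat B^{(k)}_{\intra})^\top$ whose rank matters --- and your phrasing via column spaces is the cleaner one. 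What your argument adds, and what the paper omits: (i) existence and uniqueness of $R$ from $\Img(B)\subseteq\Img(\hat B)$ and the full column rank of $\hat B$, which is what legitimizes reading off the blocks one row-block at a time and, in particular, is needed to conclude that the $(1,2)$ block actually vanishes rather than merely that a solution with a zero block exists; (ii) the explicit deflation argument producing $R_2$ and $R_3$ from the direct-sum decomposition of $\Img(\hat B)$ into $\Img(\hat B_\intra)\oplus\Img(\hat B_\inter)$, including the full-column-rank check on $P_\inter\hat B_\intra$ and the cancellation of the superfluous projector via $\Img\big((\hat B_\intra^\top P_\inter)^\dagger\big)\subseteq\ker\hat B_\inter^\dagger$, all of which is correct; and (iii) the explicit observation that the spanning tree $\hat\CG$ must restrict to a spanning tree of each $\CG_k$, a hypothesis the paper uses tacitly (its block-diagonal $\hat B_\intra$ presupposes it) but never states. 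One cosmetic note: your symmetric argument yields $R_3= B_\inter^\top (\hat B_\inter^\top P_\intra)^\dagger$, which is dimensionally consistent, whereas the lemma as printed drops the transpose on $\hat B_\inter$; this appears to be a typo in the statement rather than a discrepancy in your proof.
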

\begin{proof}
	The expressions of $R_2$ and $R_3$ follow from \cite{KR-IH:2021}. Recall that ${B_\intra=\blkdiag( B_{\intra}^{(1)},\dots, B_{\intra}^{(r))}}$and ${\hat B_\intra=\blkdiag(\hat B^{(1)}_{\intra},\dots,\hat B^{(r)}_{\intra})}$. Since ${(B^{(k)}_{\intra})^\top }$ has full row rank, we have 
	\begin{align*}
		(B^{(k)}_{\intra})^\top = (B^{(k)}_{\intra})^\top ((\hat B^{(k)}_{\intra})^\top)^\dagger(\hat B^{(k)}_{\intra})^\top,
	\end{align*}
	which completes the proof.
\end{proof}

One can rewrite Eq.~\eqref{model:controlled} into a compact form 
\begin{align*}
	\dot \theta = \omega-BW \sin(B^\top \theta)-B_\dir^+ U \sin(B_\dir^\top \theta).
\end{align*}
From Lemma~\ref{incidence:transfer}, it holds that $ B^\top \theta=R \hat B^\top \theta$ and $B_\dir^\top \theta=R[ \hat B_\intra,-\hat B_\intra]^\top \theta$. Subsequently, one can derive
\begin{align*}
	\hat B^\top \dot \theta=& \hat B^\top \omega-\hat B^\top BW \sin(R \hat B^\top \theta)\\
	&-B_\dir^+ U \sin(R[ \hat B_\intra,-\hat B_\intra]^\top \theta).
\end{align*}

Since $x=\hat B_\intra^\top \theta$ and $y = \hat B_\inter^\top \theta$, one can obtain the compact dynamics \eqref{compact_form} using the fact that $[B_{\dir1}]^+=[B_\intra]^+$ and  $[B_{\dir1}]^+=[-B_\intra]^+$ and the relations between matrices in  Eq.~\eqref{matrices} .

\subsection{Proof of Lemma~\ref{lemma:linearized}}\label{proof:linearized}
Since $x=0$ is exponentially stable uniformly in $y$ for the system \eqref{linearized}, according to the converse Lyapunov theorem  (see \cite[Th. 4.4]{haddad2011nonlinear} and \cite{QY-KY-BDOA-CM:2021}) there exists $\CD=\{x\in\R^{n-r}:\|x\|\le \rho_1\}$ and a  continuously
differentiable function function $V:[0,\infty]\times \CD\times \R^r\to \R $ such that 
\begin{align*}
	\frac{\partial V}{\partial t}+\frac{\partial V}{\partial x}\big( (J + P(t))x+f_\inter(x,y)\big)\le -c_1\|x\|^2
\end{align*}
and $\|\frac{\partial V}{\partial x}\|\le c_2\|x\|$ for some constants $c_1,c_2> 0$.  
Let 
$
	h(t,x)=f_\intra(x)+f_\ctr(U(t),x)
$
and 
$
	\Delta(t,x)=h(t,x)-(J + P(t))x.
$ 
It can be checked that $\partial h/\partial x$ is bounded and Lipschitz on $\CD$. Then, similar to the proof of \cite[Th. 4.13]{HKK:02-bis}, one can show that $\|\Delta(t,x)\|\le c_3 \|x\|^2$ for some $c_3>0$. The time derivative along the system  \eqref{compact_form} satisfies
\begin{align*}
	&{\frac{\partial V}{\partial t}+\frac{\partial V}{\partial x}\big( (J + P(t))x+\Delta(t,x)+f_\inter(x,y)\big)}\\
	&\le -c_1\|x\|^2+c_2c_3\|x\|^3\\
	&\le -(c_1-c_2c_3\rho)\|x\|^2, \forall \|x\|< \rho.
\end{align*}
Choosing $\rho=\min\{\rho_1,c_1/c_2c_3\}$ completes the proof.

\subsection{Proof of Theorem~\ref{general}} \label{proof:general}
 One can observe that (i) implies (ii). Then, it suffices to prove the exponential stability of $x=0$ for \eqref{change:time:scale}. To do that, we first present the following lemma, whose proof follows similar lines as Lemma 3.1 of \cite{TM-GB-DSB-FP:18}.
\begin{lemma}[Growth bound of perturbations]\label{lemma:bound:pert}
	There exist some constants $\bar \gamma_{k\ell}>0$, $k,\ell=1,\dots,r$, such that, for any $k$, it holds that 
	\begin{align*}
	\|\Phi^{-1} f\PC{k}_\inter( \Phi z, y)\|\le \sum\nolimits_{\ell=1}^{r}\bar \gamma_{k\ell} \|z_\ell\|.\hspace{20pt}\QEDA
	\end{align*} 
\end{lemma}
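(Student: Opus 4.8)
The plan is to turn the claimed growth bound into an elementary Lipschitz estimate on $\sin$, after first removing the $y$-dependence using the vanishing property $f_\inter(0,y)=0$, and then to track the block structure carefully so that the estimate comes out in terms of the individual $\|z_\ell\|$ rather than $\|z\|$. Concretely, I would start from the explicit form $f^{(k)}_\inter(x,y)=-(\hat B^{(k)}_\intra)^\top B_\inter W_\inter\sin(R_2 x+R_3 y)$ in \eqref{functions}. Property (ii), namely $f_\inter(0,y)=0$ for every $y$, reads $(\hat B^{(k)}_\intra)^\top B_\inter W_\inter\sin(R_3 y)=0$; subtracting this zero term gives
\[
f^{(k)}_\inter(x,y)=-(\hat B^{(k)}_\intra)^\top B_\inter W_\inter\big(\sin(R_2 x+R_3 y)-\sin(R_3 y)\big).
\]
Since $\sin$ has derivative bounded by $1$, the mean value theorem applied componentwise yields $\|\sin(R_2 x+R_3 y)-\sin(R_3 y)\|\le\|R_2 x\|$, hence $\|f^{(k)}_\inter(x,y)\|\le\eta_k\|R_2 x\|$ uniformly in $y$, with $\eta_k:=\|(\hat B^{(k)}_\intra)^\top B_\inter W_\inter\|$.

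Next I would substitute $x=\Phi z$ and unpack the block structure. Because $\Phi=\blkdiag(\Phi^{(1)},\dots,\Phi^{(r)})$, the vector $\Phi z$ has blocks $\Phi^{(\ell)}z_\ell$; writing $R_2=[R_2^{(1)},\dots,R_2^{(r)}]$ for the partition of its columns matching the clusters, one gets $R_2\Phi z=\sum_{\ell=1}^r R_2^{(\ell)}\Phi^{(\ell)}z_\ell$ and therefore
\[
\|R_2\Phi z\|\le\sum_{\ell=1}^r\|R_2^{(\ell)}\|\,\|\Phi^{(\ell)}\|\,\|z_\ell\|\le\bar c\sum_{\ell=1}^r\|R_2^{(\ell)}\|\,\|z_\ell\|,
\]
where $\bar c:=\max_\ell\sup_s\|\Phi^{(\ell)}(s,s_0)\|<\infty$, the supremum being finite precisely because $P(s)$ — and hence $\Phi$ — is $T$-periodic and continuous. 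Finally, since $\Phi^{-1}$ is block diagonal, the $k$-th block of $\Phi^{-1}f_\inter(\Phi z,y)$ equals $(\Phi^{(k)})^{-1}f^{(k)}_\inter(\Phi z,y)$, so with $\bar c':=\max_k\sup_s\|(\Phi^{(k)}(s,s_0))^{-1}\|<\infty$ we obtain
\[
\|\Phi^{-1}f^{(k)}_\inter(\Phi z,y)\|\le\bar c'\,\eta_k\,\bar c\sum_{\ell=1}^r\|R_2^{(\ell)}\|\,\|z_\ell\|=:\sum_{\ell=1}^r\bar\gamma_{k\ell}\|z_\ell\|,
\]
and any $\bar\gamma_{k\ell}$ that happens to vanish can simply be enlarged to a positive constant, since the inequality is only an upper bound.

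The argument is routine and parallels Lemma 3.1 of \cite{TM-GB-DSB-FP:18}; the only place that needs genuine care is the block bookkeeping in the middle step — splitting $R_2\Phi z$ into the $r$ contributions $R_2^{(\ell)}\Phi^{(\ell)}z_\ell$ so that the Lipschitz estimate delivers the per-cluster bound $\sum_\ell\bar\gamma_{k\ell}\|z_\ell\|$ needed later to assemble the comparison matrix $S$ in Theorem~\ref{general}, rather than a coarser $\bar\gamma_k\|z\|$. One should also keep in mind that the finiteness of $\bar c$ and $\bar c'$ is exactly where the periodicity of $P$ (equivalently, of $U$) is used.
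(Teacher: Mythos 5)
Your proof is correct and follows essentially the same route the paper intends: the paper defers to Lemma~3.1 of \cite{TM-GB-DSB-FP:18} for the Lipschitz-type bound on $f_\inter$ (which rests on exactly the vanishing property $f_\inter(0,y)=0$ and the mean value estimate on $\sin$ that you use) and combines it with the boundedness of $\Phi$ and $\Phi^{-1}$ coming from periodicity. Your block-by-block bookkeeping of $R_2\Phi z$ is the right way to obtain the per-cluster constants $\bar\gamma_{k\ell}$, so nothing is missing.
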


Let $V_k=z_k^\top X_kz_k$, and it holds that $\partial V_k/\partial z_k\le \lambda_{\max}(X_k)$. Choose $V(z)= \sum_{k=1}^{r}d_k V_k$ as a Lyapunov candidate. The time derivative of $V(z)$ satisfies
\begin{align*}
	\scalemath{0.85}{\dot V(z)}&=\scalemath{0.85}{\sum_{k=1}^{r} d_k[z_k^\top((\bar J\PC{k})^\top X_k+X_k \bar J\PC{k})z_k+\frac{\partial V}{\partial z_k}\Phi^{-1} f_\inter( \Phi z, y)]}\\
	&\scalemath{0.85}{\le \sum_{k=1}^{r} d_k[-\|z_k\|^2 + \lambda_{\max}(X_k)\sum_{k=1}^{r}\bar \gamma_{k \ell} \|z_k\|\|z_\ell\| ]},
\end{align*}
where the second inequality has used Lemma~\ref{lemma:bound:pert}.

Let ${D:=\diag(d_1,\dots,d_r)}$ and ${\hat S=[\hat s_{ij}]_{r\times r}}$ where 
\begin{align*}
	\hat s_{k\ell}=\Big\{\begin{matrix*}[l]
		1- \lambda_{\max}(X_k) \bar \gamma_{k k} , &\text{ if }k=\ell,\\
	\hat s_{k\ell}= -\lambda_{\max}(X_k)\bar \gamma_{k \ell}, &\text{ if }k \neq\ell.
	\end{matrix*}
\end{align*}
Then, one can rewrite ${\dot V(z)\le -\frac{1}{2}z^\top (DS+S^\top D)z}$. By assumption, $S$ is an $M$-matrix, and so is $\hat S$ since $\hat S=-\lambda_{\max}(X_k)S$. It follows from \cite[Th. 9.2]{HKK:02-bis} that the system \eqref{average} is exponentially stable.  

Following similar steps as in \cite[Th. 10.4]{HKK:02-bis}, one can prove that there exists $\varepsilon^*>0$ such that for any $\varepsilon<\varepsilon^*$, $z=0$ is exponentially stable uniformly in $y$ for the system \eqref{coordinated}. Since $x(s)=\Phi(s,s_0)z(s)$ and $\|\Phi\|$ is bounded, then $x=0$ is also exponentially stable uniformly in $y$ for \eqref{change:time:scale}, which completes the proof. 

\subsection{Proof of Lemma~\ref{lemma:existence}}\label{proof:existence}

\begin{proof} We construct the proof by considering a specific configuration of vibrational control. 
	
	Without loss of generality, we assume that each $B_{\intra}^{(k)}$ and $\hat B_{\intra}^{(k)}$ are ordered in a way such that their first $\bar n_k=n_k-1$ columns are the same, and the first two column are $e_2-e_1$ and $e_3-e_2$, respectively, where $e_i$ is the $i$th column of $I_{ n_k}$. Then, there exists a matrix $\tilde{R}^{(k)}_1$ such that 
	$
		R_1^{(k)}={\begin{bmatrix}
				I_{\bar n_k}\\ \tilde{R}^{(k)}_1
		\end{bmatrix}}.
	$

Let $U^{(k)}_{1}(t)=u e_1e_1^\top \sin(t)$ and $U^{(k)}_{2}(t)=-u e_1e_1^\top\sin(t)$.  It can be calculated that 
\begin{align*}
	&P\PC{k}(t)\\
	&= -(\hat B^{(k)}_{\intra})^\top [B^{(k)}_{\intra}]^+  U^{(k)}_{1}(t)-[-B^{(k)}_{\intra}]^{+} U^{(k)}_{2}(t) R_1^{(k)} \\
	&\scalemath{1}{=-(\hat B^{(k)}_{\intra})^\top u(e_1+e_2)e_1^\top \begin{bmatrix}
			I_{\bar n_k}\\ \tilde{R}^{(k)}_1\end{bmatrix}\sin(t)}\\
	&\scalemath{1}{=-u(\hat B^{(k)}_{\intra})^\top (e_1+e_2) (e_1')^\top\sin(t)   },	
\end{align*}
where the second equality follows from $e_1^\top \scalemath{0.7}{\begin{bmatrix}
		I_{\bar n_k}\\ \tilde{R}^{(k)}_1
\end{bmatrix}}=(e_1')^\top $ with $e_1'$ being the first column of $I_{\bar n_k}$. Then, it can be seen that the first two rows of $P\PC{k}(t)$ are 
$
	-u(e_2-e_1)^\top (e_1+e_2) (e_1')^\top \sin(t)=0,
$
and 
$
	-u(e_3-e_2)^\top (e_1+e_2) (e_1')^\top=(e'_1)^\top \sin(t),
$
respectively.  The rest rows of $P\PC{k}(t)$ can only have non-zero values at the first column. Therefore, $P\PC{k}(t)$ have the strictly  lower triangular form and satisfies $P\PC{k}(t)\neq 0$, which completes the proof. 
\end{proof}

\bibliographystyle{IEEEtran}
%\bibliography{refs}
\bibliography{IEEEabrv,\alias,\FP,\Main,\New}

%\begin{lemma}[Growth bound for inter-cluster perturbations \cite{TM-GB-DSB-FP:18}]\label{lemma:bound:pert}
%	There exist some constants $\gamma_{k\ell}>0$ such that, for any $k,\ell=1,\dots,r$, it holds that 
%	$
%	\scalemath{0.9}{\|f^{(k)}_{\inter}(x,y)\| \le \sum\nolimits_{\ell=1}^{r}\gamma_{k\ell} \|x_\ell\|,}
%	$
%	where $x_k=\hat B_{\intra}^{(k)} \theta \in \R^{n_k-1}$ and  $ \gamma_{k\ell}= 2\max_i (n_i-1)\sum_{j\in \CC_\ell} a_{k j}$ if $k \neq \ell$, and $ \gamma_{k k} =  \sum_{k=1,k\neq \ell}^r \gamma_{k\ell}$. 
%\end{lemma}

%One  can observe that stronger inter-cluster connections result in a larger perturbation. When there is no vibrational control, a large perturbation can destabilize $x=0$. 

\end{document}